\newtheorem{theorem}{Theorem}
\newtheorem{definition}{Definition}
\newtheorem{corollary}{Corollary}
\newenvironment{proof}{\vspace{1.5ex}\par\noindent\textbf{Proof}}%
   {\hspace*{\fill}$\Box$\vspace{1.5ex}\par}
\newcommand{\ad}{a^{\dagger}}
\newcommand{\tr}{\mbox{tr}}
\newcommand{\vac}{|0\rangle }
\begin{document}
\title{A time-dependent variational principle for dissipative dynamics}
\author{Christina V.\ Kraus}
\affiliation{Institute for Quantum Optics and Quantum Information of the Austrian Academy of Sciences, A-6020 Innsbruck, Austria}
\affiliation{Institute for Theoretical Physics, University of Innsbruck, A-6020 Innsbruck, Austria}

\author{Tobias J.\ Osborne}
\affiliation{Leibniz Universit\"at Hannover, Institute of Theoretical Physics, Appelstrasse 2, D-30167 Hannover, Germany}

\begin{abstract}
We extend the time-dependent variational principle to the setting of dissipative dynamics. This provides a locally optimal (in time) approximation to the dynamics of any Lindblad equation within a given variational manifold of mixed states. In contrast to the pure-state setting there is no canonical information geometry for mixed states and this leads to a family of possible trajectories --- one for each information metric. We focus on the case of the operationally motivated family of \emph{monotone riemannian metrics} and show further, that in the particular case where the variational manifold is given by the set of \emph{fermionic gaussian states} all of these possible trajectories coincide. We illustrate our results in the case of the Hubbard model subject to spin decoherence.
\end{abstract}

\maketitle
\section{Introduction}
One of the main challenges in a quantum mechanical experiment is to overcome the interaction of a system with its environment. Such interactions lead to \emph{decoherence} and often obscure coherent quantum phenomena. Recently it has been shown that this vice can be turned into a virtue: dissipative processes can be exploited as a possible resource for quantum state engineering~\cite{Wolf_DSE, Diehl_Kantian, Kraus_Kantian}. Several evolutions leading to non-trivial fixed points have now been proposed, including states with non-trivial topological properties~\cite{Diehl_Rico}. Such \emph{dissipative engineering} has opened up a completely new world for us. 

Motivated by the new possibilities offered by dissipative engineering there has been renewed interest in understanding dissipative processes in more detail. However, this task is complicated by the fact that, as for ground states, we can only hope for analytic solutions in very special cases. Therefore, in general, we must take recourse to  numerical approximation techniques in order to gain insight into the physics of a dissipative system. Typically the method of choice here is a \emph{Monte Carlo} sampling algorithm. Such methods have led to many insights into the dissipative systems occurring in quantum optics, but have faced limitations when applied to strongly interacting many particle systems, particularly fermions, due to the inevitable \emph{sign problem}.

There is, however, another general approach available to us, namely the \emph{variational method}. This method has been very successfully applied in the pure-state case leading to unparalleled insights into the equilibrium physics of strongly interacting many body systems. Further, the elegant \emph{time-dependent variational principle} (TDVP)~\cite{Dirac, TDVP} allows the \emph{locally optimal} study of nonequilibrium \emph{unitary} dynamics. The power of this method is well known in the field of quantum chemistry, where its application to the class of Hartree-Fock states is known as \emph{time-dependent Hartree Fock theory}~\cite{HFT_excitations}. This technique has also been exploited to great effect in the context of one-dimensional quantum spin systems in conjunction with powerful expressive variational classes such as matrix product states, a method synonymous with the \emph{density matrix renormalisation group} (DMRG) ~\cite{Jutho_dispersion}. 

In contrast to the pure-state case, there is no operationally unique way to formulate the variational method for mixed states because there is no distinguished measure of information distance: there are infinite families of inequivalent distance measures, including examples such as the \emph{fidelity} and the \emph{trace distance}. This has complicated the formulation of a mixed-state TDVP, which requires knowledge of the \emph{geometry} of state space. However, recent results in our understanding of the information geometry of mixed states allow us to revisit this problem. (See, however, ~\cite{Jackiw, Rajagopal} for related variational approaches to the von Neumann equation.)

Thus, in this Article, we formulate the TDVP for mixed states in the general case of distance measures arising from \emph{monotone riemannian metrics}. We then show that this method, when applied to the variational class of fermionic Gaussian states evolving according to an arbitrary Markovian CPT (completely positive trace-preserving) map $\rho_t = \mathcal{E}_t(\rho_0)$ are all equivalent to the application of Wick's theorem (via \emph{Gaussification}). Finally, we apply this method to the one-dimensional (1d) spinful Hubbard model subject to a decoherence process. 

\section{A review of the TDVP for pure states}
In this section we review of the TDVP for pure quantum states, and explain why this approach cannot be immediately applied to the mixed case. First, we present the necessary notation. We denote by $\mathcal{M}_n(\mathds{C})$ the set of all complex $n\times n$ matrices with entries in $\mathds{C}$. The state space of an $n$-dimensional quantum system is given by the set $\mathcal{D}_n$ (here viewed as a differentiable manifold) of all density operators defined by $\mathcal{D}_n =\{\rho \in \mathcal{M}_n(\mathds{C})\vert \rho^{\dagger} = \rho, \rho \geq 0, \tr(\rho)=1 \}$. In order to formulate a time-evolution within this manifold, we have to introduce the notion of tangent space. The tangent space $T_{\rho}\mathcal{D}_n$ to $\mathcal{D}_n$ at any interior point $\rho \in \mathcal{D}_n$ can be identified with the set $\{A \in \mathcal{M}_n(\mathds{C})\,\vert\, A^{\dagger} = A, \tr(A) = 0\}$ of traceless hermitian matrices (we assume that our processes never include any part of the boundary of the manifold). The set $\mathcal{D}_n$ can be given the structure of a Riemannian manifold by choosing a positive bilinear form $M_{\rho}(A,B)$ on $T_{\rho}\mathcal{D}_n$ for all $\rho \in \mathcal{D}_n$. This supplements us with the notion of a distance. Throughout we define a \emph{variational class}  to be a submanifold $\mathcal{V}$ of $\mathcal{D}_n$ parametrized according to $\mathcal{V} =  \{\rho(\mathbf{x})\,\vert\, \mathbf{x} \in \mathds{R}^D\}$, where the dependence on the parameters $x^j$ is assumed to be analytic.

We consider the time evolution of a quantum state $\rho_t = \rho(\mathbf{x}(t))\in \mathcal{D}_n$ in its most general form, i.e.\ $\rho_t \equiv \mathcal{E}_t(\rho_0)$, where $\mathcal{E}_t$ is a completely positive trace-preserving (CPT) map. Assuming that $\mathcal{E}_t$ is differentiable with respect to $t$ allows us to write the equation of motion as $\partial_t\rho_t = \mathcal{L}(\rho_t)$, where $\mathcal{L}$ is the infinitesimal generator of the dynamics. For example, $\mathcal{L}$ could describe a Hamiltonian or dissipative evolution of a quantum system. An exact solution to this evolution is in general hard to find, and so we aim at finding an optimal approximation to this evolution within the variational class $\mathcal{V}$. 

To this end, we first review the case of Hamiltonian time evolution within the set of pure quantum states. Here, the variational class of state vectors is represented as the set $\{\vert \psi(\mathbf{x})\rangle\vert \mathbf{x}\in \mathds{R}^D\}$. The time-dependent Schroedinger equation then reads $\dot x^{j}\partial_j |\psi(\mathbf{x})\rangle = -iH |\psi(\mathbf{x})\rangle$, where $\partial_j = \partial/\partial x^j$. Note that, in general, the vector $H |\psi(\mathbf{x})\rangle$ is not an element of the tangent space to $\mathcal{V}$ at $|\psi(\mathbf{x})\rangle$, whereas the left side is a linear combination of vectors that span the tangent space $T_{|\psi(\mathbf{x})\rangle}\mathcal{V}$. Thus, in general, there is no exact solution for $\dot x^j$. The best approximation is given by the solution to the minimisation of the \emph{information distance} (here measured using the \emph{fidelity}) between the left- and right-hand sides:  $\min_{\dot{x}^j}\| \dot x^{j}\partial_j |\psi(\mathbf{x})\rangle + iH |\psi(\mathbf{x})\rangle\|$. The minimum can be found by applying an orthogonal projection of $iH |\psi(\mathbf{x})\rangle$ onto the tangent space, as depicted in Fig.\ref{fig:TDVP}.

\begin{figure}[t]
\begin{center}
\includegraphics[width=0.5\columnwidth]{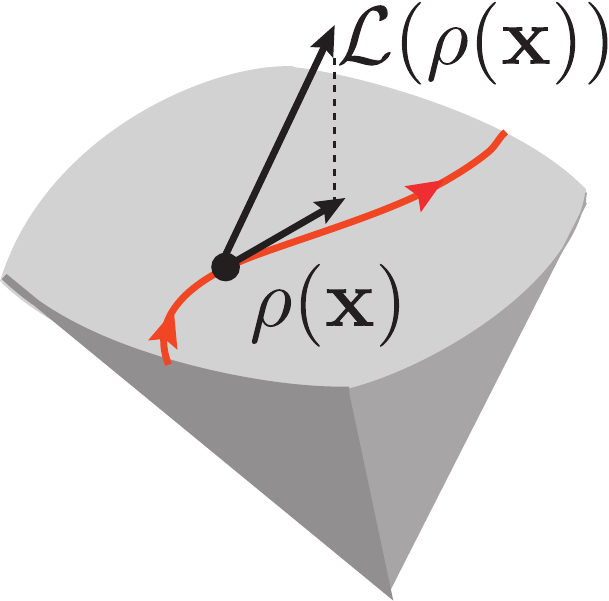}
\caption{Time-dependent variational principle with respect to a variational manifold of mixed states $\mathcal{V} = \{\rho(\mathbf{x})| \mathbf{x}\in \mathds{R}\}$. A mixed quantum state $\rho (\mathbf{x})$ within a variational manifold $\mathcal{V}$ evolves in time according to an arbitrary physical process described by a CPT map $\rho_t  = \mathcal{E}_t(\rho_0)$. In general, such a process leads out of the tangent space of the variational manifold at the point $\rho (\mathbf{x})$. Hence, we have to ``project back'' into $\mathcal{V}$ using an appropriate measure of distance (color online). \label{fig:TDVP} }
\end{center}
\end{figure}

This discussion immediately reveals why the TDVP cannot be directly applied to the mixed-state setting: the approximation of the RHS of $\dot x^j \partial_j \rho = \mathcal{L}_t(\rho)$ by a vector in the tangent space requires a unique notion of information distance. In the mixed-state case there is no operationally unique answer, since there exist infinite families of inequivalent measures, and hence, there is no canonical choice of Riemannian metric on $\mathcal{D}_n$. However, we explain in the next section a possible solution to this problem.

\section{Formulation of the TDVP for mixed states}
As we have explained above, there exists no canonical choice of Riemannian metrics in on the set of mixed quantum states. However, it turns out that there are several \emph{families} of Riemannian metrics which naturally arise from information-theoretic considerations. Here the natural condition is that the metric is \emph{monotone}, meaning that the norm induced by the bilinear form $M_{(\rho)}$ cannot increase under any CPT-map $\mathcal{E}$, i.e. $M_{\mathcal{E}(\rho)}(\mathcal{E}(A), \mathcal{E}(A))\leq M_\rho(A,A)$. The reasoning here is that the distinguishability of two states infinitesimally close to $\rho$ can never be increased under the action of a channel. Remarkably, Petz showed there is a one-to-one correspondence between the set of monotone metrics and a special class of \emph{superoperators} $\Omega_\rho$ (built in terms of convex operator functions) \cite{Petz}: these lead to monotone metrics according to $M_{\rho}(A,B) = \langle A, \Omega_{\rho}(B)\rangle \equiv \tr[A^{\dagger} \Omega_{\rho}(B)]$.

These monotone metrics now allow an operationally motivated formulation of the TDVP for the dissipative dynamics generated by $\partial_t \rho_t = \mathcal{L}(\rho_t)$ within a given variational class $\mathcal{V}$. The setup is identical to the pure-state case: we aim to find the optimal trajectory $\rho_t \in \mathcal{V}$ generated by the vector field coming from the optimal element $A \in T_{\rho_t}\mathcal{V}$ which is closest to $\mathcal{L}(\rho_t)$, where we use the quadratic form $M_{\rho_t}(A,B)$ to measure the distance. That is, we solve $\inf_{A\in T_{\rho_t}\mathcal{V}} M_{\rho_t}(A- \mathcal{L}(\rho_t),  A- \mathcal{L}(\rho_t))$.  An intuitive picture of this last equation is given in Fig.~\ref{fig:TDVP}: The evolution under $\mathcal{E}$ takes us out of the variational manifold $\mathcal{V}$, and we want to ''project back'' into $\mathcal{V}$ to find the state in the variational manifold that is the best approximation to this evolution. Using the definition of $M_{\rho_t}$ we find that this is equivalent to solving
\begin{align}
\inf_{A\in T_{\rho_t}\mathcal{V}} \langle A- \mathcal{L}(\rho_t), \Omega_{\rho_t}(A- \mathcal{L}(\rho_t))\rangle.
\end{align} 
Parametrizing $A = v^j \partial_j \rho_t$ we can rewrite this infimum as $\inf_{\mathbf{v}\in\mathds{R}^D} \mathbf{v}^T \mathbf{G}_{\rho}\mathbf{v} - \mathbf{v}^T\mathbf{l}_{\rho} - \mathbf{l}_{\rho}^T \mathbf{v} + c_0$. The solution is given by
\begin{align}\label{eq:TDVP}
\mathbf{v} &= \mathbf{G}^{-1}_{\rho}\mathbf{l}_{\rho},\\
(\mathbf{G}_{\rho})_{jk} &= \langle \partial_j \rho(\mathbf{x}(t)), \Omega_{\rho}(\partial_k \rho(\mathbf{x}(t)))\rangle,\\
(\mathbf{l}_{\rho})_j &= \langle \partial_j \rho(\mathbf{x}(t)), \Omega_{\rho}(\mathcal{L}(\rho(\mathbf{x}(t))))\rangle, 
\end{align}
where $\mathbf{G}_{\rho}$ is the \emph{pullback metric} or \emph{Gram matrix}. This solution gives us the optimal trajectory (locally in time) within $\mathcal{V}$ via integration of the equation of motion
\begin{equation}
	\partial_t\rho_t = v^j(t, \rho_t)\partial_j \rho_t.	
\end{equation}
This equation of motion is the first main contribution of our Article. Eqs. (2)--(5) can be applied to any variational manifold subject to any physical process. In the next Section we apply this framework  to a concrete example that is relevant in many-body physics.

\section{The TDVP for fermionic Gaussian States}
The understanding of fermionic quantum systems is of central interest many fields of physics. Fermions are building blocks of matter and thus central to some of the most fascinating effects known in the theory of many-body physics, like superconductivity or the quantum Hall effect. However, most problems of interest do not have a closed analytic solution, and we have to use numerical approximation techniques and we have to use appropriate variational wave functions to obtain insight into these systems. In the following we introduce the class of \emph{fermionic Gaussian states} (fGS) that has been successfully applied to solve fermionic many-body problems in the pure state setting. We show then how the TDVP can be applied using this class of states as our variational class $\mathcal{V}_G$ and give a numerical example in the last Section.

In the following we describe fermionic systems in terms of $N$ fermionic mode operators $a_j$ obeying the canonical anti-commutation relations $\{\ad_k, a_l\} = \delta_{kl}$. We use the equivalent representation in terms of $2N$ hermitian \emph{Majorana} operators $c_{2j-1} = \ad_j + a_j$ and $c_{2j} = (-i)(\ad_j - a_j)$ which obey $\{c_k, c_l\} = 2\delta_{kl}$. We take as our variational manifold the set of \emph{fermionic Gaussian states} (fGS). fGS are those states whose density operator can be expressed as an exponential of a quadratic function of the Majorana operators, $\rho = \kappa \exp[-\tfrac{i}{2}c^TKc]$, where $K = -K^T \in \mathds{R}^{2N \times 2N}$. All information about the state is encoded in the real and anti-symmetric covariance matrix (CM), $\Gamma_{kl} = \tfrac{i}{2}\tr([c_k, c_l]\rho)$, due to Wick's theorem: $i^p \tr[\rho c_{j_1}\ldots c_{j_{2p}}] = \mbox{Pf}(\Gamma_{j_1,
\ldots, j_{2p}})$ where $1 \leq j_1 < \ldots < j_{2p} \leq 2M$ and $\Gamma_{j_1,
\ldots, j_{2p}}$ is the corresponding $2p \times 2p$ submatrix of
$\Gamma$. $\mbox{Pf}(\Gamma_{j_1, \ldots, j_{2p}})^2= \mbox{det}
(\Gamma_{j_1, \ldots, j_{2p}})$ is called the Pfaffian (see, e.g., \cite{LinearOptics} for further details).

The class of fGS is a natural generalization of the variational classes used in Hartree-Fock and BCS-theory. Thus, it is combining and extending the most successful tools in the description of fermionic many-body systems and hence allows for a description of a wide class of fermionic phases of matter, like superfluids, Mott and spin ordered phases. Recently it has also been shown that fGS with topological order can be engineered in a cold-atom implementation via a local dissipative process~\cite{Diehl_Rico}. Thus, fGS have proven to be a powerful class capable of capturing fermionic phases with highly non-trivial properties.  Every pure fGS is the ground state of a quadratic Hamiltonian. Further, fGS remain Gaussian under the evolution according to a quadratic Hamiltonian or a dissipative process with linear Lindblad operators. Using this, fGS have allowed the approximation of the ground and thermal states of, the time-evolution~\cite{gHFTnum} of, as well as the excitation spectra~\cite{KrausOsborne} of interacting fermionic systems.  

The main ingredient used in all studies exploiting fGS is a process known as \emph{Gaussification}, i.e., the approximation of any $N$-body correlation function in terms of a product of single-particle correlation functions via Wick's theorem (see above):

\begin{definition}
 Let $\sigma$ be a fermionic quantum state. Then its \emph{Gaussification}, $\rho_G = \mathcal{G}(\sigma) \in \mathcal{V}_G$, is defined via the relation $\Gamma(\rho_G) = \Gamma(\sigma)$, which is equivalent to the application of Wick's theorem to $\sigma$. 
 \end{definition}
In the following, we show that the process of Gaussification is locally optimal in time within the variational class of fGS for all $\alpha$ metrics, since every monotone metric can be written as a convex combination of them~\cite{Petz}. 

\begin{theorem}
Let $\mathcal{E}_t$ be an arbitrary (differentiable) Markovian CPT map defining a time evolution on the space of density matrices via $\rho_t = \mathcal{E}_t(\rho_0)$. Then, the optimal approximation of this time evolution within the variational manifold of Gaussian states $\mathcal{V}_G$ with respect to any (convex combination of) monotone metrics of \emph{$\alpha$-norm type}:
\begin{align}\label{eq:omega}
\Omega_{\rho}^{\alpha}(\sigma) = \tfrac{1}{2}(\rho^{-\alpha}\sigma \rho^{\alpha-1} + \rho^{\alpha-1} \sigma \rho^{-\alpha}),
\end{align}
is obtained via Gaussification.
\end{theorem}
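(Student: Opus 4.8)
The plan is to show that the Gaussification trajectory solves the normal equations of the TDVP minimisation for every $\alpha$, and then invoke the cited fact~\cite{Petz} that every monotone metric is a convex combination of $\alpha$-metrics to conclude the general case. Concretely, I would parametrise $\mathcal{V}_G$ by the independent entries of the covariance matrix $\Gamma$, so that the tangent space $T_{\rho}\mathcal{V}_G$ is spanned by the vectors $\partial_j\rho$. Since the minimised quadratic form is $\langle A - \mathcal{L}(\rho), \Omega_\rho^\alpha(A - \mathcal{L}(\rho))\rangle$, the optimal tangent vector $A$ is characterised by the orthogonality (normal) condition $\langle \partial_j\rho, \Omega_\rho^\alpha(A - \mathcal{L}(\rho))\rangle = 0$ for all $j$; and by the self-adjointness of $\Omega_\rho^\alpha$ with respect to the Hilbert--Schmidt product this is equivalent to $\tr[\Omega_\rho^\alpha(\partial_j\rho)\,(A - \mathcal{L}(\rho))] = 0$.

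The candidate solution I would test is the velocity $A_G$ of the Gaussification trajectory $\rho_t^G = \mathcal{G}(\mathcal{E}_t(\rho_0))$. Since $\Gamma$ is \emph{linear} in the state and $\partial_t\mathcal{E}_t(\rho_0) = \mathcal{L}(\rho_t)$, the defining relation $\Gamma(\rho_t^G) = \Gamma(\mathcal{E}_t(\rho_0))$ forces $A_G \in T_{\rho_t}\mathcal{V}_G$ to reproduce exactly the two-point moments of the true generator, $\tr([c_k,c_l]A_G) = \tr([c_k,c_l]\mathcal{L}(\rho_t))$, while trace preservation gives $\tr(A_G) = \tr(\mathcal{L}(\rho_t)) = 0$. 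Hence $X := A_G - \mathcal{L}(\rho_t)$ annihilates every quadratic Majorana monomial: $\tr(X) = 0$ and $\tr(c_kc_l\,X) = 0$ for all $k,l$.

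The crux of the argument --- and the step I expect to be the main obstacle --- is the lemma that $\Omega_\rho^\alpha(\partial_j\rho)$ is a \emph{purely quadratic} operator, i.e.\ lies in the span of $\{\one, c_kc_l\}$. I would establish this from three structural facts about fGS: first, powers of a Gaussian operator are again Gaussian, $\rho^s \propto \exp[-\tfrac{i}{2}c^T(sK)c]$, so conjugation by $\rho^{\pm\alpha}$ is a Bogoliubov transformation mapping the quadratic sector to itself; second, the adjoint action of a Gaussian operator sends Majorana operators to linear combinations of Majorana operators, preserving the degree-two subspace; and third, differentiating $\rho = \kappa\exp[-\tfrac{i}{2}c^TKc]$ with respect to a covariance-matrix parameter yields, via the standard integral formula for the derivative of a matrix exponential together with the first two facts, an expression of the form $\partial_j\rho = R_j\rho$ with $R_j$ quadratic. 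Combining these, each term $\rho^{\mp\alpha}(\partial_j\rho)\rho^{\pm\alpha-1} = \rho^{\mp\alpha}R_j\rho^{\pm(1-\alpha)}$ is the conjugation of a quadratic operator by a Gaussian operator, hence quadratic, and so is their symmetric sum $\Omega_\rho^\alpha(\partial_j\rho)$. The delicate point to control here is the non-commutativity of $K$ with $\partial_j K$ inside the exponential, which is precisely what the integral representation of $\partial_j e^{-\frac{i}{2}c^TKc}$ is designed to handle.

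With the lemma in hand the theorem follows immediately: because $\Omega_\rho^\alpha(\partial_j\rho)$ is quadratic and $X = A_G - \mathcal{L}(\rho)$ has vanishing quadratic moments, the normal condition $\tr[\Omega_\rho^\alpha(\partial_j\rho)\,X] = 0$ holds for every $j$, so $A_G$ is the minimiser, unique by the strict convexity guaranteed by positivity of the metric. As this conclusion is independent of $\alpha$, and any monotone metric is a convex combination of $\alpha$-metrics, the Gaussification velocity is simultaneously optimal for all monotone metrics, which is the claimed coincidence of all trajectories.
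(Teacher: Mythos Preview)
Your approach is correct and takes a genuinely different route from the paper. The paper's strategy is to \emph{define} a special basis of $T_\rho\mathcal{V}_G$ by setting $\partial_{(j_1,j_2)}\rho := (\Omega_\rho^\alpha)^{-1}(ic_{j_1}c_{j_2})$ and then proving, as a separate Corollary, that this really is a basis --- via an explicit listing of the tangent vectors $[\rho,c_kc_l]$ and $i\{\rho,c_kc_l\}-2\Gamma_{kl}\rho$ in the standard form of $\rho$, followed by a dimension count $4\binom{N}{2}+N = \binom{2N}{2}$. In that basis the Gram matrix and the vector $\mathbf{l}_\rho$ collapse by construction to $\tr[\partial_{(j_1,j_2)}\rho\, c_{k_1}c_{k_2}]$ and $\tr[\mathcal{L}(\rho)\, c_{j_1}c_{j_2}]$, so the TDVP equations \emph{are} the Gaussification equations. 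You instead keep the natural CM parametrisation and prove the dual statement directly: $\Omega_\rho^\alpha(\partial_j\rho)$ lies in the quadratic sector, because $\partial_j\rho = R_j\rho$ with $R_j$ quadratic (Duhamel formula) and Gaussian conjugation preserves degree. The two arguments are logically equivalent --- both amount to the assertion that $\Omega_\rho^\alpha$ is a bijection between $T_\rho\mathcal{V}_G$ and $\mathrm{span}\{ic_{j_1}c_{j_2}\}$ --- but the paper establishes it from the quadratic side by inverting $\Omega_\rho^\alpha$, while you establish it from the tangent side by applying $\Omega_\rho^\alpha$. Your route is more structural and avoids the explicit tangent-vector enumeration; the paper's route avoids the Duhamel representation and yields the tangent-space basis in a reusable explicit form. (Minor slip: your displayed exponents $\rho^{\mp\alpha}R_j\rho^{\pm(1-\alpha)}$ do not match what the substitution $\partial_j\rho = R_j\rho$ actually produces, namely $\rho^{-\alpha}R_j\rho^{\alpha}$ and $\rho^{\alpha-1}R_j\rho^{1-\alpha}$; both are nevertheless Gaussian conjugations of a quadratic, so your conclusion stands.)
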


Theorem 1 is the second main result of this Article. To prove it we obtain a Gaussification of the time evolution of $\rho$ according to the generator $\mathcal{L}$ of the CPT map $\mathcal{E}_t$ as follows. Let $\rho(t+\delta t) = \rho(t) + \delta t \mathcal{L}(\rho)$, where $\delta t$ is an infinitesimal time step. The operator $\rho(t + \delta t) -\rho(t)$ is not necessarily a member of the tangent space $T_{\rho}\mathcal{V}_G$. However, the Gaussified operator $\mathcal{G}(\rho(t+\delta t)) -\rho(t)$ is, and it is therefore a linear combination of the tangent vectors $\partial_j \rho$, i.e.\  $\mathcal{G}(\rho(t+\delta t)) = \rho(t) + \delta t \sum_j v^j \partial_j \rho$, where $v^j \in \mathds{R}$. Since, by definition, Gaussification implies $\Gamma(\mathcal{G}(\rho)) = \Gamma(\rho)$, we obtain from the linearity of the map $\Gamma$, the following defining condition for the expansion parameters $v^j$:
\begin{align}\label{eq:Gaussification}
\tr[c_{k_1} c_{k_2} \mathcal{L}(\rho)] = \sum_jv^j \tr[c_{k_1} c_{k_2} \partial_j \rho].
\end{align} 
Now we show that an application of the TDVP projection with respect to any $\alpha$-norm also leads to Eq.~\eqref{eq:Gaussification}. For the proof of this statement we need the following 
\begin{corollary}
Let $\rho \in \mathcal{V}_G$ be a fGS of $N$ modes, and let $T_{\rho}\mathcal{V}_G$ denote the tangent space of $\rho$. Then, $\mathcal{B}=\{\partial_{(j_1,j_2)} \rho = (\Omega_{\rho}^{\alpha})^{-1}(ic_{j_1}c_{j_2})\}_{ 1\leq j_1< j_2 \leq 2N}$ is a hermitian basis for the tangent space  $T_{\rho}\mathcal{V}_G$. (Here $(j_1,j_2)$ denotes a \emph{multi index}.)
\end{corollary}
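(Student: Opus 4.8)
The plan is to identify the tangent space $T_\rho\mathcal{V}_G$ explicitly and then recognise the family $(\Omega_\rho^\alpha)^{-1}(ic_{j_1}c_{j_2})$ as its natural metric-dual coordinate basis. First I would differentiate the Gaussian form $\rho=\kappa\exp[-\tfrac{i}{2}c^TKc]$ along a curve $K(t)$: by Duhamel's formula every tangent vector is a linear combination of operators $\int_0^1\rho^s Q\,\rho^{1-s}\,ds$ (corrected by a multiple of $\rho$ to enforce $\tr=0$), where $Q=-\tfrac{i}{2}c^T(\delta K)c$ is an arbitrary quadratic in the Majorana operators. Since $\delta K$ ranges over the real antisymmetric matrices, this shows $\dim T_\rho\mathcal{V}_G=\binom{2N}{2}=N(2N-1)$, which already matches the number of operators $ic_{j_1}c_{j_2}$ with $1\le j_1<j_2\le 2N$.

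The crucial structural input is that imaginary-time conjugation by a Gaussian operator acts linearly on the Majoranas. Because $\rho^t=\kappa^t\exp[-\tfrac{it}{2}c^TKc]$ is again Gaussian, the commutator $[-\tfrac{i}{2}c^TKc,c_k]$ is linear in the $c$'s and integrates to $\rho^t c_k\rho^{-t}=\sum_l (M_t)_{kl}c_l$ for a matrix $M_t$ depending analytically on $t$. Hence conjugation preserves the span $\mathcal{Q}=\mathrm{span}\{\one,\,ic_{j_1}c_{j_2}\}$ of quadratics, and so does $\Omega_\rho^\alpha$, since each term $\rho^{-\alpha}(\cdot)\rho^{\alpha-1}$ is, after reordering, such a conjugation. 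I would make this precise by rewriting $\Omega_\rho^\alpha$ applied to a tangent vector as $\tfrac12\int(\rho^t Q\rho^{-t})\,dt$ over two shifted intervals, each integrand lying in $\mathcal{Q}$; invertibility of $\Omega_\rho^\alpha$ then forces it to restrict to a linear bijection of $\mathcal{Q}$ onto itself.

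Given invariance I would assemble the basis as follows. The extended tangent space $\tilde T=T_\rho\mathcal{V}_G\oplus\mathbb{C}\rho$ has dimension $N(2N-1)+1=\dim\mathcal{Q}$, and $\Omega_\rho^\alpha$ maps $\tilde T$ into $\mathcal{Q}$ (tangent vectors go to quadratics by the previous step, and $\Omega_\rho^\alpha(\rho)=\one$). By injectivity this is a bijection $\tilde T\to\mathcal{Q}$, so its inverse sends each $ic_{j_1}c_{j_2}$ back into $\tilde T$, i.e.\ into $T_\rho\mathcal{V}_G$ up to the normalisation direction $\rho=(\Omega_\rho^\alpha)^{-1}(\one)$, which I remove by subtracting the appropriate multiple of $\rho$ to restore $\tr=0$. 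Hermiticity is immediate: $ic_{j_1}c_{j_2}$ is hermitian for $j_1\ne j_2$ and $\Omega_\rho^\alpha$ commutes with the adjoint (as $\rho$ is hermitian), so $(\Omega_\rho^\alpha)^{-1}$ preserves hermiticity. Finally, being a linear bijection, $(\Omega_\rho^\alpha)^{-1}$ carries the linearly independent family $\{ic_{j_1}c_{j_2}\}$ to a linearly independent family of $N(2N-1)$ tangent vectors; matching $\dim T_\rho\mathcal{V}_G$, this is a basis.

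The hard part will be the invariance statement together with its interaction with normalisation. One must verify that the Majorana-bilinear span is genuinely stable under the noncommuting factors $\rho^{-\alpha}(\cdot)\rho^{\alpha-1}$ and $\rho^{\alpha-1}(\cdot)\rho^{-\alpha}$, uniformly in $\alpha$, and that the induced map is nondegenerate \emph{on} $\mathcal{Q}$ rather than merely \emph{into} it. The trace bookkeeping is the subtle secondary point: $\Omega_\rho^\alpha$ does not preserve tracelessness --- already for a single mode $(\Omega_\rho^\alpha)^{-1}(ic_1c_2)$ acquires a component along $\one$ --- so the identification with $T_\rho\mathcal{V}_G$ holds only after projecting out the radial direction $\rho$; equivalently, the basis is understood modulo normalisation, which does not affect the TDVP trajectory. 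Diagonalising $K$ into canonical $2\times2$ blocks reduces both points to single-mode computations and makes the bijection transparent.
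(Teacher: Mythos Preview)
Your approach is correct and takes a genuinely different route from the paper's. The paper generates the tangent space by acting with the most general infinitesimal Gaussian map $\rho\mapsto W\rho W^\dagger/\tr[W\rho W^\dagger]$ with $W=e^{i\varepsilon Z_{kl}c_kc_l}$, brings $\rho$ to its standard form $\prod_j\tfrac12(\mathds{1}+i\lambda_j\tilde c_{2j-1}\tilde c_{2j})$, and then explicitly lists the resulting tangent vectors $A^{(R)}_{kl}=[\rho,c_kc_l]$ and $A^{(I)}_{kl}=i\{\rho,c_kc_l\}-2\Gamma_{kl}\rho$, counting their independent directions case by case to obtain $\dim T_\rho\mathcal{V}_G=N(2N-1)$; linear independence of $\mathcal{B}$ is deduced separately from positivity of $\Omega_\rho^\alpha$. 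You instead differentiate the exponential parametrisation via Duhamel, use that Gaussian conjugation acts linearly on the Majoranas to show that $\Omega_\rho^\alpha$ carries the extended tangent space $\tilde T=T_\rho\mathcal{V}_G\oplus\mathds{C}\rho$ into the quadratic span $\mathcal{Q}=\mathrm{span}\{\mathds{1},ic_{j_1}c_{j_2}\}$, and match dimensions to conclude that this is a bijection. The paper's explicit computation delivers concrete tangent vectors in standard form, which is handy for later calculations; your structural argument, on the other hand, makes transparent a step the paper leaves implicit, namely that $(\Omega_\rho^\alpha)^{-1}(ic_{j_1}c_{j_2})$ actually \emph{lies in} the (extended) tangent space and is not merely a linearly independent family of the correct cardinality. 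Your flag on tracelessness is also well taken: from $\tr[\rho\,\Omega_\rho^\alpha(X)]=\tr X$ one gets $\tr\bigl[(\Omega_\rho^\alpha)^{-1}(ic_{j_1}c_{j_2})\bigr]=\tr[\rho\, ic_{j_1}c_{j_2}]=\Gamma_{j_1j_2}$, so the elements of $\mathcal{B}$ are generically not traceless --- a point the paper's proof does not address, and which, as you note, is harmless for the TDVP since the $\rho$-direction drops out of the projected equations of motion.
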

\begin{proof}
We first claim that if $\{Q_a\}_{a=1}^{2N}$ is a set of linearly independent operators then so is  $\{\Omega_{\rho}^{\alpha}(Q_a)\}_{a=1}^{2N}$. This follows immediately from the fact that $\Omega_\rho$ is a \emph{positive} superoperator so that $\mathcal{B}$ is a set of linearly independent vectors.

Next, we show that $\dim(\mathcal{B}) = \dim(T_{\rho}\mathcal{V}_G)$. To this end, we determine a basis of $T_{\rho}\mathcal{V}_G$ by applying the most general infinitesimal Gaussian transformation on $\rho$~\cite{LinearOptics}. These are of the from $\rho \mapsto W \rho W^{\dagger} / \tr[W \rho W^{\dagger}]$, where $W = e^{i \varepsilon Z_{kl}c_kc_l}$ with $Z_{kl} = X_{kl} + iY_{kl}$, $X_{kl}, Y_{kl} \in \mathds{R}$, and $\varepsilon \ll 1$. This leads directly to the tangent vectors $A^{(R)}_{kl} = [\rho, c_kc_l]$ and $A^{(I)}_{kl} = i\{\rho, c_kc_l\} - 2\Gamma_{kl}\rho$. In order to determine the number of linearly independent tangent vectors, we work in the basis $\tilde c_k =\sum_l O_{kl} c_l$, where $OO^T = \mathds{1}$, so that $\rho$ is in its standard form, $\rho = \prod_{j=1}^N\tfrac{1}{2}(\mathds{1} + i\lambda_j\tilde c_{2j-1}\tilde c_{2j})$, where $\lambda_j \in (-1, 1)$ (i.e. $\rho$ has no pure subspace). Then, the tangent vectors are readily calculated. For all $1 \leq k < l \leq N$ we obtain $A^{(R)}_{2k,2l} =$ $ -2i(\lambda_l \tilde c_{2k} \tilde c_{2l-1} + \lambda_k \tilde c_{2k-1} \tilde c_{2l}) \hat \rho_{kl}$, $A^{(R)}_{2k,2l-1} =$ $ 2i(\lambda_l \tilde c_{2k} \tilde c_{2l} - \lambda_k \tilde c_{2k-1} \tilde c_{2l-1})\hat \rho_{kl}$, $A^{(R)}_{2k-1,2l-1} =$ $ 2i(\lambda_l \tilde c_{2k-1} \tilde c_{2l} + \lambda_k \tilde c_{2k} \tilde c_{2l-1})\hat \rho_{kl}$ and $A^{(R)}_{2k-1,2l} =$ $ 2i(-\lambda_l \tilde c_{2k-1} \tilde c_{2l-1} + \lambda_k \tilde c_{2k} \tilde c_{2l})\hat \rho_{kl}$, where $\hat \rho_{kl} =$ $\prod_{j\neq k,l}\tfrac{1}{2}(\mathds{1} + i\lambda_j\tilde c_{2j-1}\tilde c_{2j})$. Further, we find that $A^{(I)}_{2k,2l} = $ $ 2i(\tilde c_{2k} \tilde c_{2l} - \lambda_k \lambda_l \tilde c_{2k-1} \tilde c_{2l-1})\hat \rho_{kl}$, $A^{(I)}_{2k,2l-1} =$ $ 2i(\tilde c_{2k} \tilde c_{2l-1} + \lambda_k \lambda_l \tilde c_{2k-1} \tilde c_{2l})\hat \rho_{kl}$, $A^{(I)}_{2k-1,2l-1} =$ $ 2i(\tilde c_{2k-1} \tilde c_{2l-1} - \lambda_k \lambda_l \tilde c_{2k} \tilde c_{2l})\hat \rho_{kl}$ and $A^{(I)}_{2k-1,2l} =$ $ 2i(\tilde c_{2k-1} \tilde c_{2l} + \lambda_k \lambda_l \tilde c_{2k} \tilde c_{2l-1})\hat \rho_{kl}$. Thus we obtain, for all $\lambda_{k,l} \in (-1,1)$, the four linearly independent basis vectors $i\tilde c_{2k-1}\tilde c_{2l-1}\hat \rho_{kl}$, $i\tilde c_{2k-1}\tilde c_{2l}\hat \rho_{kl}$, $i\tilde c_{2k}\tilde c_{2l-1}\hat \rho_{kl}$ and $i\tilde c_{2k-}\tilde c_{2l}\hat \rho_{kl}$. If $k=l$, we obtain $A^{(I)}_{2k-1,2k} =2 i(1-\lambda_k^2)\tilde c_{2k-1}\tilde c_{2k}\hat \rho_{k}$, while $A^{(R)}_{2k-1,2k}=0$. Thus the dimension of the tangent space is $\dim(T_{\rho}\mathcal{V}_G)=4 N(N-1)/2 + N = 2N(2N-1)= \dim(\mathcal{B})$.
 
Finally, we show that the operators $\partial_{(j_1, j_2)}\rho$ are hermitian. This follows immediately from the fact that $\Omega_{\rho}^{\alpha}((\partial_{(j_1, j_2)}\rho)^{\dagger}) = \Omega_{\rho}^{\alpha}(\partial_{(j_1, j_2)}\rho)$, and that $\Omega_{\rho}^{\alpha}$ is invertible
\end{proof}

With Corollary 1 in hand we now directly apply the TDVP projection in our special basis, following Eq. \eqref{eq:TDVP}. We see that
\begin{align*}
 (G_{\rho})_{(j_1,j_2),(k_1,k_2)} &= \langle \partial_{(j_1,j_2)}\rho, \Omega_{\rho}^{\alpha}(\partial_{(k_1,k_2)}\rho)\rangle \nonumber\\
 &= \tr [\partial_{(j_1,j_2)}\rho c_{k_1}c_{k_2}],\\
 (l_{\rho})_{(j_1,j_2)} &= \langle \partial_{(j_1,j_2)}\rho, \Omega_{\rho}^{\alpha}(\mathcal{L}(\rho))\rangle =\tr[\mathcal{L}(\rho)c_{k_1}c_{k_2}], 
 \end{align*}
 and we arrive at Eq. \eqref{eq:Gaussification}. This proves the equivalence of Gaussification and the application of the TDVP to the variational class of Gaussian states.
\begin{figure}[t]
\begin{center}
\includegraphics[width=0.95\columnwidth]{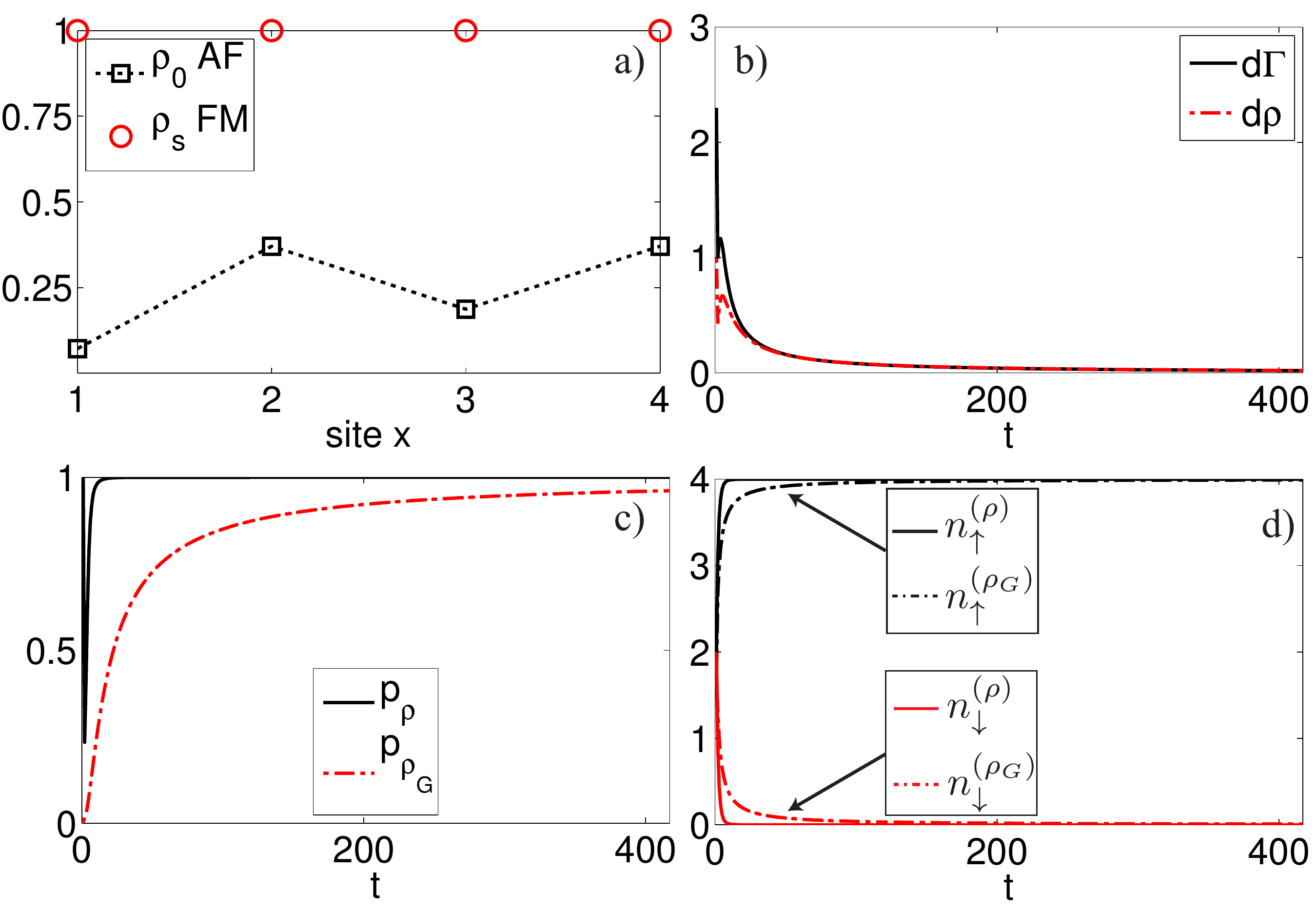}
\end{center}
\caption{a) Anti-ferromagnetic order (AF) in the ground state $\rho_0$ of the Hubbard model for $u=4, \mu = -2$. The steady state $\rho_s$ shows a ferromagnetic order (FM). b) Comparison of the real-time evolution (time $t$ in units of $1/\kappa$) of the dissipative process with the Gaussified version. We present the difference of the CM and the two states $d\Gamma(t) = ||\Gamma_{\rho_G(t)} - \Gamma_{\rho(t)}||_2$, $d\rho(t) = || \rho_G(t) - \rho(t)||_2$. c) Purity, $p_{\rho} = \tr[\rho^2]$,  for the exact (solid)  and the Gaussified (dashed) evolution, $\rho_G$. d) Evolution of the particle number $n$ for the two spin states $\sigma_s = \uparrow, \downarrow$ for the real ($\rho$) and Gaussified ($\rho_G$) process (color online).\label{fig:example}}
\end{figure}

\section{Example}
In the following we apply our approach to the $1d$ spinful Hubbard model with repulsive interactions subject to a magnetic field. The evolution of the system is described by a Lindblad equation $\partial_t \rho = -i[H,\rho] + \kappa\sum_{x}j_x \rho j_x^{\dagger} - \tfrac{1}{2}\{ j_x^{\dagger}  j_x, \rho\}$, where
\begin{align*} 
H &= J\sum_{x, \sigma_s} \ad_{x, \sigma_s}a_{x+1,\sigma_s} + u\sum_{x}n_{x,\uparrow}n_{x,\downarrow} + \mu \sum_{x,\sigma_s} n_{x,\sigma_s},\\
j_x &= \ad_{x,\uparrow}a_{x,\downarrow}. 
\end{align*}
Starting from the ground state of the Hubbard Hamiltonian we expect that the external noisy magnetic field modeled by the operators $j_{x}$ drives the system to a completely spin-polarized  state. The time scale of this process depends on the ratio between the decoherence strength $\kappa$ and the parameters of the Hubbard Hamiltonian. 

We consider a system of $L=4$ sites with periodic boundary conditions at half filling and consider an interaction $u = 4$, a chemical potential $\mu = -2$ and $\kappa = 1$, where we take the hopping $J$ to be the energy scale. The unique ground state $\rho_0$ has anti-ferromagnetic (AF) order (see Fig.\ref{fig:example}a  (squares)). We implement a real-time evolution $\rho(t)$ of the dissipative process, arriving at a unique steady state $\rho_s$ which is completely spin polarized (ferromagnetic order (FM)) (Fig. \ref{fig:example}a (circles)) and given by $\prod_{x=1}^4\ad_{x\uparrow}\vac $. In order to measure how well the Gaussified evolution given by $\rho_G(t)$ approximates the exact dynamics, we present  the deviation $d\Gamma(t) = ||\Gamma_{\rho_G(t)} - \Gamma_{\rho(t)}||_2$ of the CM of the real and Gaussified process and the  distance between the two states  $d\rho(t) = || \rho_G(t) - \rho(t)||_2$  in Fig.~\ref{fig:example}b.  We find that, as one might expect, on short time scales the Gaussified dynamics takes a different path from the exact evolution, but quickly coincides with the exact evolution for intermediate and long time scales. This can be explained by the fact that the ground state of the Hubbard model is not a Gaussian state, so that the initial state for the Gaussified evolution is a large distance from the ground state. 

In order to obtain more insight into the Gaussified evolution we compare the time dependence of some physical quantities with the exact evolution. In Fig.~\ref{fig:example} c we present the purity $p_{\rho} = \tr[\rho^2]$ for the exact evolution (solid) and the Gaussified process (dashed). We see that we end up in a pure state with only spin up particles (Fig.~\ref{fig:example}d) in the limit $t\rightarrow \infty$ in both cases. 

\section{Summary}
In summary, we have extended the time-dependent variational principle to the mixed-state setting, providing locally optimal (in time) equations that allow for an approximation of any Lindblad dynamics, given a variational manifold of mixed states and some information metric. In the case of fermionic Gaussian states we have proven that all $\alpha$-metrics lead to the same dynamics in the space of density matrices, which can equivalently be obtained via an application of Wick's theorem at each time step (Gaussification). Thus, this method can easily be applied to large systems in any dimension and geometry, providing a powerful numerical tool for a variational study of dissipative dynamics for mixed quantum states. 

\emph{Acknowledgments.---}%
We thank J. Haegeman and F. Verstraete for useful discussion. This work was supported, in part, by the cluster of excellence EXC 201 “Quantum Engineering and Space-Time Research”, by the Deutsche Forschungsgemeinschaft (DFG),  the EU grant QFTCMPS and by the Austrian Ministry of Science BMWF as part of the UniInfrastrukturprogramm of the Research Platform Scientific Computing at the University of Innsbruck.


\end{document}